\newcommand{\comment}[1]{ }
\newcommand{\ba}{\begin{eqnarray}}
\newcommand{\ea}{\end{eqnarray}}
\newtheorem{definition}{Definition}
\newtheorem{theorem}{Theorem}
\newtheorem{lemma}{Lemma}
\begin{document}

\title{Min-entropy sources for Bell tests}

\author{Le Phuc Thinh}
\affiliation{Centre for Quantum Technologies, National University of Singapore, 2 Science Drive 3, Singapore 117543}
\author{Lana Sheridan}
\affiliation{Centre for Quantum Technologies, National University of Singapore, 2 Science Drive 3, Singapore 117543}
\author{Valerio Scarani}
\affiliation{Centre for Quantum Technologies, National University of Singapore, 2 Science Drive 3, Singapore 117543}
\affiliation{Department of Physics, National University of Singapore, 3 Science Drive 2, Singapore 117542}

\begin{abstract}
Device independent protocols rely on the violation of Bell inequalities to certify properties of the resources available.  The violation of the inequalities are meaningless without a few well-known assumptions.  One of these is measurement independence, the property that the source of the states measured in an inequality is uncorrelated from the measurements selected.  Since this assumption cannot be confirmed, we consider the consequences of relaxing it and find that the definition chosen is critically important to the observed behavior.  Considering a definition that is a bound on the min-entropy of the measurement settings, we find lower bounds on the min-entropy of the source used to choose the inputs required to deduce any quantum or non-local behavior from a Bell inequality violation.  These bounds are significantly more restrictive than the ones obtained by endowing the measurement-input source with the further structure of a Santha-Vazirani source.  We also outline a procedure for finding tight bounds and study the set of probabilities that can result from relaxing measurement dependence. 
\end{abstract}

\pacs{03.65.Ta 03.65.Ud 03.67.-a}

\maketitle



\section{Introduction}
\label{sec:intro}

The violation of Bell inequalities can be used to certify important quantum information properties in a black-box scenario under minimal assumptions. This idea of ``device-independent" certification started in the context of quantum key distribution, where the violation of Bell inequalities bounds the information leaked to the eavesdropper~\cite{MY98,ABGM+07,VV12b}; and it has been extended to various other tasks, notably state certification~\cite{MY98,BLM+09,MYS12}, measurement certification~\cite{RHC+11}, and private randomness expansion~\cite{PAM10,CK11,VV12}. Ultimately, this stems from the fact that the violation of Bell inequalities certifies the presence of a quantifiable amount of intrinsic randomness: indeed, \textit{a contrario}, if the outcomes were predictable, one could have predicted them in advance and the measurement could consist of reading from a pre-existing list.  This is exactly what the violation of Bell inequality certifies as impossible.

Two assumptions are left in device-independent certification. The first is \textit{no-signaling}: the choice of the measurement setting of one party should not be known to the measurement boxes of the other parties before they produce their outcome. This can be guaranteed ultimately by ensuring space-like separation, although one may also trust a weaker demonstration of separation, as for instance in \cite{PAM10}. The second assumption is \textit{measurement independence}: the information $\lambda$ contained in the boxes in each run should be uncorrelated from the choice of the settings in that run. So far, no way of checking measurement independence is known in a black-box scenario: the best one can do is to buy the source of $\lambda$ and the devices that choose the settings from different providers, who are believed not to be conspiring together. Alternatively, one can partly give up the black-box scenario, characterize the devices and be confident that the relevant degrees of freedom are uncorrelated.

It is clear that no-signaling and measurement independence cannot be arbitrarily relaxed: if any amount of signaling is allowed, or if arbitary correlation is admitted between source and settings, the violation of a Bell inequality can be obtained with purely classical resources $\lambda$, so there is no hope to conclude that $\lambda$ contains intrinsic randomness. However, with \textit{the aim of reducing the assumptions of device-independent certification to their bare minimum}, one can \textit{partially} relax no-signaling and measurement independence, and ask how much information must be signaled and how much measurement dependence must be allowed for a Bell test to become irrelevant \cite{H11}. In this paper, we focus on the latter question, the study of \textit{partial measurement dependence} (sometimes called reduced measurement independence or reduced ``free will"), which has been the object of a few recent studies~\cite{CR12,KHSPMKSE12,GMTDAA12,MP13}.  In particular, we consider the random source that is required to choose the input settings for a Bell inequality and place bounds on the min-entropy necessary to show any difference between local and no-signaling output distributions. Note that if the violation of a Bell inequality is used in a device independent protocol to certify the amplification or expansion of input randomness, this source would serve as the seed randomness in the protocol.

\section{Measurement dependence and its basic consequences}

\subsection{Measurement independence}
\label{sec:measind}

For the sake of this introduction, we consider a bipartite Bell scenario. Operationally, a Bell experiment consists of $N$ apparently identical runs \footnote{We focus on the operational description of current experiments and do not consider the more general, but as yet abstract, case of \textit{parallel repetition}, in which all the inputs are given at the same time.}, in each of which box A receives input $x$ and outputs a value $a$, box B receives input $y$ and outputs a value $b$.  A \emph{measurement-setting source} (henceforth \emph{source}) for the Bell test supplies the experimentalist with inputs $x$ and $y$; its behavior is modelled by a probability distribution $p(xy|\lambda)$. One can then estimate the statistics $p(ab|xy)$. We denote by $\lambda$ the information present in the boxes in a given run.

Measurement independence, the assumption that we want to relax, is captured by the condition
\begin{equation}
\label{eq:micondition}
p(\lambda|xy) = p(\lambda) \ \forall x,y\,.
\end{equation} Under this assumption, the observed statistics are modeled by
\begin{equation}
\label{eq:mimodel}
p_\text{MI}(ab|xy) = \int p(ab|xy \lambda)p(\lambda) \, \text{d}\lambda \ .
\end{equation} The specific goal of a Bell test is to assess whether there is intrinsic randomness in the boxes, that is, in the usual terminology, to guarantee that $\lambda$ is not a \textit{local variable}. Mathematically, local variables are defined by $p(ab|xy \lambda)=p(a|x\lambda)p(b|y\lambda)$. It is useful to stress that, as written, \eqref{eq:mimodel} contains an additional assumption, namely that $\lambda$ itself is chosen independently in each run according to the distribution $p(\lambda)$. Under measurement independence, it can be proved that this is ultimately not a restriction for Bell tests, although one has to be careful in interpreting statistics from finite samples \cite{BCHK+02,Gill03,ZGK11}.

Measurement independence cannot be denied in a systematic way without undermining the scientific method itself (if a clinical trial is to make sense, whether each patient receives the drug or the placebo cannot depend on the any details of the patients' conditions). However, it is certainly possible to question measurement independence in a given setup: the devices that determine the inputs $x,y$ may be correlated to the process that determines $\lambda$. The origin of such correlation may be trivial, like the fluctuations in power of the city network to which all the devices are connected; it may be due to lack of attention of the experimentalists, who introduced unwanted connections; or it may be strongly conspiratorial, in an adversarial scenario in which the devices come from an untrusted provider. In all cases, \eqref{eq:micondition} does not hold, nor does the proof that one can restrict the study to independently-chosen $\lambda$.

\begin{figure}[h]
\includegraphics[scale=0.25]{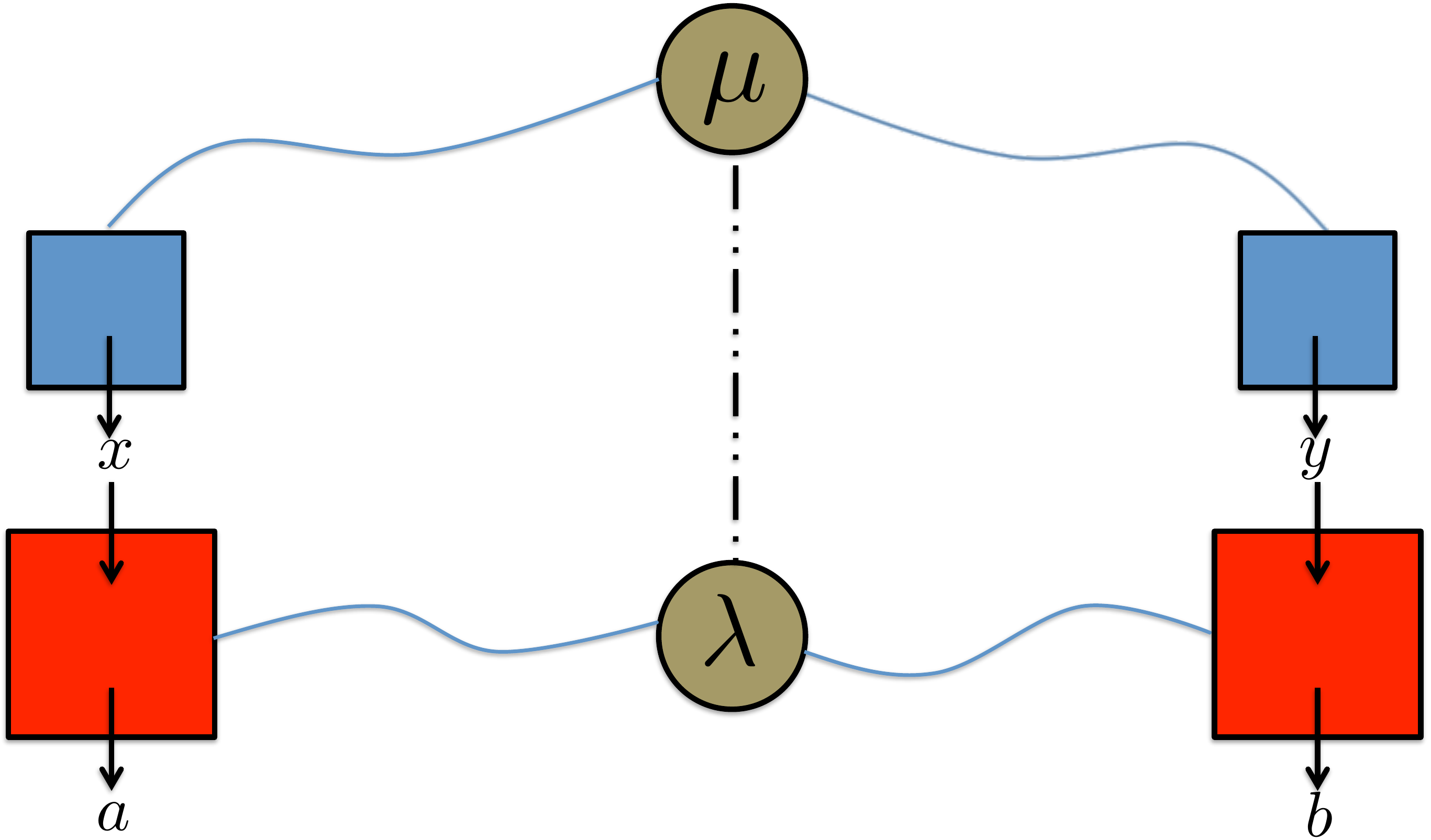}
\caption{(color online) There are many different processes by which the information $\lambda$ in devices $A$ and $B$ might become correlated with the inputs to the devices $x$ and $y$, as discussed in the text.  In this illustration the processes are represented by some external pre-existing variable $\mu$ that serves to introduce the correlation.  The blue boxes represent the physical random number generators used to pick the inputs to the Bell test.}
\label{fig:bipartiteillustration}
\end{figure}

By relaxing condition \eqref{eq:micondition}, one allows correlations between the boxes' content $\lambda$ and the choice of the settings $x,y$. Bayes theorem implies that \begin{equation}
p(\lambda|xy) \neq p(\lambda)\Longleftrightarrow p(xy|\lambda) \neq p(xy)\,.
\end{equation} The first relation could be read as ``the output of the source is restricted for a given choice of settings", the second as ``the choice of settings is restricted for a given output of the source". Neither needs to refer to a real causal relation: all is compatible with both $\lambda$ and $x,y$ being influenced by a common cause (Fig.~\ref{fig:bipartiteillustration}). That being clarified, our discourse will be mostly phrased in the second way (the first way will be used in Section \ref{sec:lambda}). We shall then look at measurement dependence as \textit{reducing the probability of certain pairs of settings}. In the case where the dependence is sufficient to \textit{exclude} enough pairs of settings, unwanted features of local variable models may be hidden. This is the same intuition behind the power of the detection loophole; in fact, measurement dependence is even stronger, because it may allow to exclude a single \textit{pair} of settings, whereas the detection loophole is local and excludes all pairs of settings such that one given setting of (say) Bob is associated to unwanted features. This opens a wealth of possibilities that we review rapidly next.

\subsection{Effects of measurement dependence}
\label{ssviol}

The obvious effect of measurement independence is the possibility of \textit{faking a violation of Bell inequalities}. A Bell inequality is built on a linear combination of $p(ab|xy)$, whose maximal value (called \textit{algebraic limit}) cannot be reached by local variables. If, in each run, one can exclude some suitable pairs of settings in correlation with the content of the boxes $\lambda$, then it becomes possible to reach the algebraic limit while having only local variables in the boxes.

Let us illustrate this point with the most famous Bell inequality, that of Clauser, Horne, Shimony and Holt (CHSH). The inequality reads
\begin{equation}
\left| \langle a_0 b_0 \rangle + \langle a_0 b_1 \rangle + \langle a_1 b_0 \rangle - \langle a_1 b_1 \rangle \right| \leq 2
\end{equation} with $a_x,b_y\in\{-1,+1\}$. In order to achieve the algebraic limit of $4$, one should have $a_0=b_0$, $a_1=b_0$, $a_0=b_1$ and $a_1=-b_1$. Local deterministic points exist that satisfy three out of these four conditions. If one wants to achieve the algebraic limit with local variable and measurement dependence, a sufficient strategy is the following: in each run, $\lambda$ is chosen among the aforementioned local deterministic points, and the pair of settings corresponding to the unwanted condition is never chosen~\cite{H11,KHSPMKSE12}.

The fact that a sufficient amount of measurement dependence can lead to the algebraic limit has an intriguing consequence for some inequalities. Indeed, in generic inequalities, the algebraic limit may lie even above what can be reached with no-signaling correlations. For instance, the tilted CHSH inequality
\begin{equation}
\left| \langle a_0 b_0 \rangle + \langle a_0 b_1 \rangle + \langle a_1 b_0 \rangle - \langle a_1 b_1 \rangle + \alpha \langle a_0 \rangle \right| \leq 2+ \alpha \ ,
\end{equation}
has an algebraic limit of $4+\alpha$, but no-signaling correlations can reach only up to 4 if $\alpha\leq 2$~\cite{AMP12}. If measurement dependence is allowed, to the point that one pair of settings can be excluded, then one can achieve the algebraic limit with a convex mixture of
\begin{align} 
\label{sigstrategy} 
\begin{array}{ccc}
\lambda = (+1, -1, -1, +1) & \textrm{together with}&  (x,y) \neq 00 \\
\lambda = (+1, +1, +1, -1) & \textrm{together with}&  (x,y) \neq 01 \\
\lambda = (+1, -1, +1, +1) & \textrm{together with}&  (x,y) \neq 10 \\
\lambda = (+1, +1, +1, +1) & \textrm{together with}& (x,y) \neq 11
\end{array}
\end{align} where we denoted a local deterministic point as $\lambda=(a_0,a_1,b_0,b_1)$. If a Bell test is run with this underlying strategy, the observed correlations will lie outside the no-signaling polytope, i.e. are formally signaling. Obviously, this does not mean that measurement dependence makes it possible to use entanglement to actually send a message: in order for (say) Alice to send a message to Bob, she must be able to choose her setting at will, which is precisely what measurement dependence denies. At any rate, one must be careful when working with measurement dependence: the worst case are correlations that reach the algebraic limit, not the no-signaling one (to our knowledge, all the studies of measurement dependence so far dealt with inequalities for which the two limits happen to coincide \cite{H11,CR12,KHSPMKSE12,GMTDAA12,MP13}).

The take-away message of this paragraph is that one does not have to reach the extreme case of total measurement dependence (i.e.~$\lambda$ determining $x,y$ uniquely): \textit{already with some partial amount of measurement dependence, it becomes impossible to draw any conclusion from the violation of a Bell inequality}. This has important consequences when the source is characterized only by its conditional min-entropy. Indeed, one of our main result will consist in deriving general bounds for this amount (Section \ref{sec:robustness}). In order to do that, we need first to recall the definition of min-entropy and its relation to the Santha-Vazirani condition in light of measurement dependence.

\section{Min-entropy and measurement dependence}
\label{sec:defs}

As mentioned, the source of the Bell test behaves according to $p(xy|\lambda)$. Measurement independence implies that $p(xy|\lambda)$ has as much entropy or randomness as $p(xy)$. In contrast, partial measurement dependence means that there is some randomness in the source, but it is less than the entropy of the distribution $p(xy)$. The min-entropy and min-entropy deficit $H_{\min}(\mathbf{Z})-H_{\min}(\mathbf{Z}|\mathbf{\Lambda})$ are measures of randomness of a source, and they partly capture the amount of measurement dependence in special cases. But note that they are not intrinsic measures of measurement dependence (for instance, min-entropy deficit equals 0 does not imply measurement independence). If the min-entropy is not high enough, it leaves open the possibility of excluding certain settings, which allows faking of Bell violations as we discussed before. This behavior is forbidden in Santha-Vazirani sources as explained next.

\subsection{Min-entropy vs Santha-Vazirani condition}
\label{ss:critical}

We illustrate our point with an example. The \textit{chained inequality} is a bipartite Bell inequality with $m$ settings for each party and binary outcomes $a,b$ for both measurements on $A$ and $B$, which reads
\ba
I_m &=& p(a = b | x=1, y=m) + \sum_{\stackrel{x,y \text{ s.t.}}{x\in \{y,y+1\}}} p(a \neq b | x, y)  \nonumber \\
&\leq& 2m-1\,.
\ea
It has been used to put stringent bounds on quantum theory thanks to the property that, in the limit $m\rightarrow\infty$, its algebraic limit $I_m=2m$ can be reached with measurements on quantum states \cite{BKP06,CR08}.

Out of the $m^2$ possible pairs of settings, $2m$ are effectively used in the inequality. Furthermore, there exist local deterministic points that can satisfy $2m-1$ of these conditions. Therefore, in order to verify any conclusion based on the chained inequality, it is enough to have an amount of measurement dependence that allows the exclusion of only one pair of settings out of $m^2$. In the limit of large $m$, under whichever measure, such a source is very close to a fully random source: for instance, its min-entropy per run (defined below) is $\log(m^2-1)$, which differs from the fully random value $\log m^2$ by $O(m^{-2})$. This example shows that \textit{a source, which would presumably be considered as good as it gets in an abstract assessment, is already catastrophic for the Bell inequality under study}. Notice that this remark is not in contradiction with the results of~\cite{CR12}, which can be seen as proving that the chained inequality is pretty robust to measurement dependence: indeed, in that work, the additional Santha-Vazirani assumption was made on the source, which implies that all the pairs of settings are possible in each run. Our argument, based on excluding one setting in each run, does not apply.

It is now time to present the definitions we have just sketched in their suitable formal setting. We shall consistently use the word \textit{source} to stress that the source of randomness we are interested in is the randomness of the inputs given the knowledge of the physical process $\lambda$ or vice versa, not the randomness possibly present in $\lambda$ (which would be the intrinsic randomness of quantum origin in the ideal case).

\subsection{Formal definitions}

Here we review rapidly the definitions of well-known types of sources of randomness for the purpose of this paper, referring to~\cite{Vadhan} for a comprehensive study.

Consider a random variable $Z$ in an alphabet $\mathcal{Z}$ of size $d$; and let $\mathbf{Z} = Z_1...Z_N$ be an $N$-dit string. In our case, $Z$ will represent the settings chosen for the Bell test, i.e. $Z=(x,y)$ in a bipartite scenario. Randomness being synonymous with unpredictability, a source of randomness will be characterized by specifying what one wants to predict and how predictable it is, given some prior information $\mathbf{\Lambda}$ (supposed to be classical throughout this paper). One would then say that the source contains randomness if
\begin{equation}
P_{\text{guess}}(\mathbf{Z}|\mathbf{\Lambda}) := \sum_{\lambda} P(\mathbf{\Lambda}=\lambda)P_{\text{guess}}(\mathbf{Z}|\mathbf{\Lambda}=\lambda) < 1\,,
\end{equation} where $P_{\text{guess}}(\mathbf{Z}|\mathbf{\Lambda}=\lambda) := \max_\mathbf{z}p(\mathbf{Z}=\mathbf{z}|\mathbf{\Lambda}=\lambda)$. The amount of randomness is quantified by the min-entropy
\ba
H_{\min}(\mathbf{Z}|\mathbf{\Lambda}) := - \log P_{\text{guess}}(\mathbf{Z}|\mathbf{\Lambda})\,.
\ea Clearly, $H_{\min}(\mathbf{Z}|\mathbf{\Lambda})>0$ implies the presence of some randomness. To someone who does not have access to $\mathbf{\Lambda}$, the source will appear to have min-entropy $H_{\min}(\mathbf{Z}) = - \log P_{\text{guess}}(\mathbf{Z})$ which can only be higher by the data processing inequality. Though obvious, it may be worth stressing that $\sum_{\lambda} P(\mathbf{\Lambda}=\mathbf{\lambda}) P_{\text{guess}}(\mathbf{Z}|\mathbf{\Lambda}=\mathbf{\lambda})$ is \textit{not} the same as $P_{\text{guess}}(\mathbf{Z})$, since $P_{\text{guess}}$ is not a given probability distribution but a notation for a procedure that picks up the maximum of a probability distribution. As an extreme example, if $\mathbf{Z}$ looks uniform but the knowledge of $\mathbf{\Lambda}$ determines $\mathbf{z}$ uniquely, one has $P_{\text{guess}}(\mathbf{Z})=1/d^N$ and $\sum_{\lambda} P(\mathbf{\Lambda}=\mathbf{\lambda}) P_{\text{guess}}(\mathbf{Z}|\mathbf{\Lambda}=\mathbf{\lambda})=1$.

The loosest characterization of the source, i.e. the one that requires fewer assumptions, simply puts a bound on the min-entropy:
\begin{definition} \emph{Min-entropy source.} A random variable $\mathbf{Z}$ is a $k$-min-entropy source of randomness with respect to another random variable $\mathbf{\Lambda}$ if $H_{\min}(\mathbf{Z}|\mathbf{\Lambda}) \geq k$.
\end{definition} As soon as $k>0$, the knowledge of $\mathbf{\Lambda}$ does not determine $\mathbf{z}$ uniquely. One can add some structure to a min-entropy source. For instance, a $k$-min-entropy source is called \textit{uniform} if $H_{\min}(\mathbf{Z}|\mathbf{\Lambda}=\lambda) := - \log P_{\text{guess}}(\mathbf{Z}|\mathbf{\Lambda}=\lambda) \geq k$ for all values of $\lambda$. A \textit{block min-entropy source} is one for which not only the min-entropy of the whole string, but the min-entropy of blocks is also lower bounded. These notions will not be used in this paper.

As soon as $k\leq\log (d^N-1)$, the definition of $k$-min-entropy source is compatible with $P_{guess}(\mathbf{Z}=\mathbf{z}|\mathbf{\Lambda}=\lambda)=0$ for one string $\mathbf{z}$. As hinted in paragraph \ref{ss:critical}, the possibility that some settings are not chosen is critical for sources of Bell tests. Because of this, one may want to add to the properties of the source the assumption that \textit{all} the $d^N$ strings have non-zero probability. This is equivalent to the following type of source:

\begin{definition} \emph{Santha-Vazirani sources.} A random variable $\mathbf{Z}$ is a $(p_{min},p_{max})$ Santha-Vazirani source with respect to $\mathbf{\Lambda}$ (where $0 \leq p_{min} \leq 1/d$ and $1/d \leq p_{max} \leq 1$) if  
\begin{equation}
p_{\min} \leq p(z_i|\lambda,z_1,...,z_{i-1}) \leq p_{max} \ \ \forall \ i \ .
\end{equation}
\end{definition} 
If $Z_i$ is a bit, $p_{\min}=1-p_{\max}$ is usually written $\delta$~\cite{SV84}. Some of the most important results in measurement dependence in Bell tests have been obtained for Santha-Vazirani sources~\cite{CR12,GMTDAA12,MP13}. These results show that there is a real advantage in considering Bell-based randomness, because it overcomes no-go theorems for classical information.

Finally, let us focus on distributions that are \textit{independent and identically distributed (i.i.d.)} such that
\begin{equation}
p(\mathbf{Z}=\mathbf{z}|\mathbf{\Lambda}=\lambda)=\prod_{j=1}^N p(Z_j=z_j|\lambda).
\end{equation} This can also be viewed as a block min-entropy source where each block consists of only one symbol, $Z_i$. In this case, the Santha-Vazirani definition implies:
\begin{equation}
p_{\min} \leq p(z|\lambda) \leq p_{max} \ .
\end{equation}
We will use a different notation such that $p_{\max} = P_M$ and $p_{\min} = P_m$ to make clear that we are in the i.i.d.\ scenario. Then the definition of uniform min-entropy sources is equivalent to the figure of merit of measurement dependence used in~\cite{KHSPMKSE12}, namely
\begin{equation}
P_M:=\max_{z,\lambda}p(z|\lambda) \ ,  \qquad \text{[i.i.d.]}
\label{def:Dax}
\end{equation}
since $H_{\min}(Z|\Lambda=\lambda)\geq k \text{ for all } \lambda$ is equivalent to $P_{M}\leq2^{-k}$. In the following, we will use these two figure of merits interchangeably for i.i.d. models.

Instead of bounding the largest probability, the smallest probability also gives information on measurement dependence, as first proposed in~\cite{PHHH09}:\begin{equation}
P_m:=\min_{z,\lambda} p(z|\lambda)  \ .   \qquad \text{[i.i.d.]}
\label{def:lower}
\end{equation}
If only $P_{M}$ is explicitly bounded, then a bound on $P_{m}$ can be inferred, however, it might be trivial, since it can be negative: $P_{m} \geq 1 - (d-1) P_{M}$.  Bounding only the min-entropy of the input source to the Bell test, or equivalently bounding only $P_M$, which is the guessing probability, allows much different worst-case behavior in Bell tests than when the Santha-Vazirani definition is adopted, as we shall now explore.

\section{Lower bound for min-entropy sources}
\label{sec:robustness}

We will be dealing with a $K$-partite Bell scenario where the $i^{\textrm{th}}$ party has $m_i>1$ measurement settings ($m_A,m_B$ for bipartite) and each setting has an arbitrary number of outcomes. The joint configuration of settings $\mathbf{z}=z_1...z_K$ with $z_i\in\{1,..m_i\}$ ($\mathbf{z}=xy$ for bipartite) is a $K$-tuple in the set of all settings $\mathcal{S}$ of size $\prod_{i=1}^{K}m_i$. In this Section, moreover, we consider a Bell test in which the observed statistics of the settings follow a uniform distribution, that is
\begin{equation}
H_{\min}(\mathbf{Z_1,...,Z_K}) = N\log|{\cal S}|\,,
\end{equation} or equivalently
\ba
p_{obs}(z_1...z_K) := \sum_{\lambda} p(z_1...z_K|\lambda)p(\lambda) = \big(\prod_{i=1}^K m_i\big)^{-1}.
\label{eq:pobs}
\ea
This is not an assumption like those on the nature of the source: $p_{obs}$ is observed in a realization; but it is a frequent working assumption for theoretical works, which was made in all previous works on measurement dependence. In Section \ref{sec:nonunif}, we shall see that a non-uniform $p_{obs}$ has interesting consequences in studies of measurement dependence.

We are presently able to discuss our main result: a lower bound on the min-entropy of the source, below which no conclusion can be drawn from any Bell test, unless further structure is assumed.

\subsection{Reaching the no-signaling limit}

The main insight is provided by the following Lemma, which we present in the bipartite scenario (the generalization to multipartite scenarios holds with identical proofs and more cumbersome notation, so we give it in Appendix \ref{appx}):
\begin{lemma}\label{lemma1}
Let $P(ab|xy)$ be an arbitrary no-signaling distribution with $x\in\{1,...,m_A\}$ and $y\in\{1,...,m_B\}$. For any pair of settings $(\bar{x},\bar{y})$, there exists a local distribution $P_L(ab|xy)$ such that
\ba
P_L(ab|xy)&=&P(ab|xy)    \\
\textrm{for } (x,y)&\in&{\cal S}_{\bar{x},\bar{y}}\equiv \left\{(\bar{x},y'),(x',\bar{y}) : x'\in\{1,...,m_A\}\,,  \right. \nonumber \\
& & \qquad \qquad \qquad \qquad \ \ \ \ \ \left. y'\in\{1,...,m_B\} \right\}.  \nonumber 
\label{eq:setdef}
\ea Moreover, this result is tight: if another pair of settings is added to the subset of pairs, there exists a no-signaling point for which those probabilities are nonlocal.
\end{lemma}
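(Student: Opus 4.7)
The plan is to split the proof into an achievability half, constructing a local $P_L$ that matches $P$ on the cross $\mathcal{S}_{\bar x,\bar y}$, and a tightness half, exhibiting a no-signaling $P$ for which any additional pair ruins locality.

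For achievability I will exhibit $P_L$ as an explicit local hidden-variable model. Take $\mu = (a^*, b^*, \vec\alpha, \vec\beta)$ and declare Alice's deterministic response to be $a^*$ when asked $\bar x$ and $\alpha(x)$ for $x\neq\bar x$, and symmetrically for Bob, so that $P_L(ab|xy)=\sum_\mu q(\mu)\,\delta_{a,A(x,\mu)}\delta_{b,B(y,\mu)}$ is local by construction. It then remains only to choose $q(\mu)$. My proposal is to build $q$ hierarchically: draw $a^*$ from $P(a|\bar x)$; conditionally on $a^*$, draw $b^*$ from $P(b|a^*,\bar x,\bar y)$ and each $\beta(y)$ for $y\neq\bar y$ independently from $P(b|a^*,\bar x,y)$; conditionally on $b^*$, draw each $\alpha(x)$ for $x\neq\bar x$ independently from $P(a|b^*,x,\bar y)$. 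A short Bayes-rule calculation, invoking only the no-signaling identities $\sum_b P(ab|xy) = P(a|x)$ and $\sum_a P(ab|xy)=P(b|y)$, shows that the marginals of $q$ on $(a^*,b^*)$, on $(a^*,\beta(y))$ and on $(\alpha(x),b^*)$ reproduce exactly $P(ab|\bar x\bar y)$, $P(ab|\bar x y)$ and $P(ab|x\bar y)$, which is all that is demanded on $\mathcal{S}_{\bar x,\bar y}$.

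For tightness I will produce a no-signaling $P$ for which adding any single pair $(x',y')$ with $x'\neq\bar x$ and $y'\neq\bar y$ already blocks a local realization. Taking a Popescu-Rohrlich box on the $2\times 2$ sub-scenario $\{\bar x,x'\}\times\{\bar y,y'\}$, extended in any no-signaling way (e.g.\ uniform outputs) on the remaining settings, gives such a $P$: a hypothetical local $P_L$ agreeing on $\mathcal{S}_{\bar x,\bar y}\cup\{(x',y')\}$ would have to coincide with the PR box on all four pairs of the sub-scenario, contradicting the CHSH bound, since PR attains the algebraic CHSH value $4$ whereas local models cap at $2$.

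I expect the main delicate step to be verifying the global consistency of the hierarchical $q$: because $\vec\beta$ is drawn conditionally on $a^*$ while $\vec\alpha$ is drawn conditionally on $b^*$, one might worry that the marginal of $b^*$ (or of $a^*$) is distorted. The check reduces to the one-line identity $\sum_{a^*} P(a^*|\bar x)\,P(b^*|a^*,\bar x,\bar y) = P(b^*|\bar y)$, which is itself an instance of no-signaling and ensures that the $(\alpha(x),b^*)$ marginal telescopes correctly to $P(ab|x\bar y)$. The multipartite generalization announced in the lemma statement follows the same template, with a single reference setting $\bar z_i$ chosen per party and conditionals built outward from the central tuple $(\bar z_1,\dots,\bar z_K)$.
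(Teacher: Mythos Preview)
Your proposal is correct and is essentially the paper's own proof: your hierarchical distribution $q(\mu)$ coincides exactly with the paper's joint law $\mathbf{P}(a_1,\dots,a_{m_A};b_1,\dots,b_{m_B})=P(a_1)P(b_1|a_1)\prod_j P(a_j|b_1)\prod_k P(b_k|a_1)$, with the only cosmetic difference that you make the LHV structure explicit via deterministic response functions whereas the paper invokes Fine's theorem to pass from the joint law to locality. The tightness argument is likewise the same---the paper also appeals to a CHSH-violating no-signaling point on the $2\times2$ sub-scenario $\{\bar x,x'\}\times\{\bar y,y'\}$.
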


\begin{proof}
The proof can be done by constructing explicitly one such local distribution. Let us fix $(\bar{x},\bar{y})=(1,1)$ without loss of generality. From the no-signaling distribution $P$, we construct
\ba
& & \hspace{-2em} \mathbf{P}(a_1,a_2,...,a_{m_A};b_1,b_2,...,b_{m_B})  \nonumber \\
& & \ \  =P(a_1)P(b_1|a_1)\,\prod_{j=2}^{m_A} P(a_j|b_1)\,\prod_{k=2}^{m_B} P(b_k|a_1)
\ea with obvious notations. This is a valid joint probability distribution over the outcomes of all the measurements. Now, on the one hand, the marginals $\mathbf{P}(a_j;b_k)\equiv P_L(a,b|j,k)$ define a local distribution, as first proved by Fine \cite{Fine82}. On the other hand, it is easy to show that $\mathbf{P}(a_1;b_k)=P(a,b|1,k)$: one should sum first over all possible values of $a_2,...,a_{m_A}$ to find $\mathbf{P}(a_1;b_1,b_2,...,b_{m_B})=P(a_1)\,\prod_{k=1}^{m_B} P(b_k|a_1)$, after which the sum over the $b$'s is obvious. Similarly one proves that $\mathbf{P}(a_j;b_1)=P(a,b|j,1)$. So indeed we have a local distribution that mimicks the initial no-signaling one on the desired subset of pairs of settings.

As for the tightness, suppose that we add a single pair of settings, say $(2,2)$, to ${\cal S}_{1,1}$: there exist no-signaling points for which CHSH is violated by the settings $(1,1)$, $(1,2)$, $(2,1)$ and $(2,2)$; so those statistics can't be mimicked by a local distribution.
\end{proof}

Now we can state the main theorem:
\begin{theorem}
Consider a min-entropy source with an observed min-entropy $H_{\min}(\mathbf{XY})=N\log(m_Am_B)$ for an $N$-run bipartite Bell test with $m_A$ inputs on Alice, $m_B$ inputs on Bob and arbitrary alphabets for the outcomes. If
\ba
H_{\min}(\mathbf{XY|\Lambda})&\leq& N\,\log (m_A+m_B-1)
\label{mainbound}
\ea no conclusion can be drawn from the Bell test, since the no-signaling limit of the inequality can be reached with local distributions. The generalization of this result to $K$-partite Bell tests reads
\ba
H_{\min}(\mathbf{Z_1...Z_K|\Lambda})&\leq& N\,\log \Big(\sum_{k=1}^K m_k - K +1\Big)\,.
\label{mainboundmulti}
\ea 
for $H_{\min}(\mathbf{Z_1...Z_K}) = - \log (\prod_K m_K)$.  Notice in particular that, without further assumptions, any source of randomness with $H_{\min}(\mathbf{XY|\Lambda})\leq N\,\log 3$ is useless as a source for any Bell tests.
\end{theorem}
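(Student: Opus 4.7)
The plan is to construct, for any no-signaling distribution $P(ab|xy)$, an explicit local hidden-variable model that reproduces $P$ exactly, has uniform setting marginal $p_{obs}(xy)=1/(m_A m_B)$, and attains conditional min-entropy $N\log(m_A+m_B-1)$. Since every Bell inequality reaches its no-signaling bound on some such $P$, this immediately gives the theorem: the observed statistics of the model already sit at the no-signaling boundary while coming from a local hidden variable, so no Bell inequality can distinguish the two.

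The construction is i.i.d.\ across runs, so I work with a single run and lift by independence. The key idea, read off directly from Lemma~\ref{lemma1}, is to let the hidden variable encode a pair $(\bar{x},\bar{y})$ drawn uniformly on $\{1,\ldots,m_A\}\times\{1,\ldots,m_B\}$, together with whatever ancilla is needed to realize the local distribution $P_L^{(\bar{x},\bar{y})}$ supplied by that lemma (concretely the joint output string $(a_1,\ldots,a_{m_A};b_1,\ldots,b_{m_B})$ from the proof of the lemma, making the strategy local and even deterministic on outputs). Conditioned on this hidden variable, I draw the settings $(x,y)$ uniformly from the cross set $\mathcal{S}_{\bar{x},\bar{y}}$ of size $m_A+m_B-1$, and produce outputs via $P_L^{(\bar{x},\bar{y})}$. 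Then $p(xy|\lambda)$ is uniform on a set of size $m_A+m_B-1$ for every $\lambda$, which immediately yields $H_{\min}(XY|\Lambda)=\log(m_A+m_B-1)$ per run and $N\log(m_A+m_B-1)$ after $N$ i.i.d.\ repetitions, saturating \eqref{mainbound}.

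The substantive content beyond Lemma~\ref{lemma1} is a combinatorial identity, and this is the main (really the only) obstacle: each pair $(x,y)$ must lie in exactly $m_A+m_B-1$ cross sets $\mathcal{S}_{\bar{x},\bar{y}}$. This follows by inclusion--exclusion on the two conditions $\bar{x}=x$ and $\bar{y}=y$ ($m_B$ choices plus $m_A$ choices minus the one overlap), and it simultaneously forces both the uniform marginal $p(xy)=1/(m_A m_B)$ required by \eqref{eq:pobs} and the identity $p(ab|xy)=P(ab|xy)$, since on every cross set containing $(x,y)$ the distribution $P_L^{(\bar{x},\bar{y})}$ agrees with $P$ there. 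The $K$-partite bound \eqref{mainboundmulti} follows by the exact same argument with Lemma~\ref{lemma1} replaced by its multipartite version in Appendix~\ref{appx}; the combinatorial step is the analogous inclusion--exclusion over $K$ slab conditions $\bar{z}_k=z_k$, which again yields $\sum_k m_k - K + 1$ both as the size of each generalized cross set and as the multiplicity of each setting tuple.
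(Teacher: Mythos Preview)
Your proof is correct and follows essentially the same route as the paper: build an i.i.d.\ source by drawing $(\bar{x},\bar{y})$ uniformly, restricting the settings to $\mathcal{S}_{\bar{x},\bar{y}}$ uniformly, and playing the local strategy $P_L^{(\bar{x},\bar{y})}$ from Lemma~\ref{lemma1}. Your inclusion--exclusion count that every $(x,y)$ belongs to exactly $m_A+m_B-1$ cross sets is the explicit justification behind the paper's one-line claim that $p(\lambda_{\bar{x},\bar{y}})=1/(m_Am_B)$ reproduces the uniform $p_{obs}$, and it also cleanly gives $p(ab|xy)=P(ab|xy)$; the paper asserts both without spelling out the count.
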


\begin{proof} We will construct an explicit i.i.d.\ source which allows the faking of a Bell violation up to the no-signaling bound with appropriate local resources. From Lemma \ref{lemma1} we know that there exist subsets ${\cal S}_{\bar{x},\bar{y}}$ of $m_A+m_B-1$ pairs of settings, for which no difference can be seen if a local distribution is substituted for a possibly nonlocal no-signaling point: in particular, this could be the no-signaling point that reaches the no-signaling limit for the inequality under study. If $H_{\min}(\mathbf{XY|\Lambda})$ is sufficiently low, the source will allow only the pairs of settings that belong to one of the ${\cal S}_{\bar{x},\bar{y}}$ and distribute the corresponding local strategy $\lambda_{\bar{x},\bar{y}}$. The source
\begin{equation}
p(xy|\lambda_{\bar{x},\bar{y}})=\begin{cases} \frac{1}{m_A+m_B-1}, & \mbox{if } x,y\in\mathcal{S}_{\bar{x},\bar{y}} \\ 0, & \mbox{otherwise }\end{cases}
\label{eq:strat}
\end{equation}
has $P_M= \frac{1}{m_A+m_B-1}$ in each run, whence we have proved the bound \eqref{mainbound} as long as we can find  $p(\lambda_{\bar{x},\bar{y}})$ such that $\sum_{\bar{x},\bar{y}} p(xy|\lambda_{\bar{x},\bar{y}})p(\lambda_{\bar{x},\bar{y}})=p_{obs}(xy)$ for all $x,y$. In the case where $p_{obs}$ is uniform, this can always be found by simply choosing uniformly the pair $(\bar{x},\bar{y})$, i.e. $p(\lambda_{\bar{x},\bar{y}})=\frac{1}{m_Am_B}$. This concludes the proof for the bipartite case. The proof of the multipartite case is identical using the material of Appendix \ref{appx}. The final remark of Theorem~\ref{lemma1} stems from the fact that each Bell test much involve at least two parties and each must have at least two settings.

\end{proof}

Because of the tightness of Lemma \ref{lemma1}, the bounds \eqref{mainbound} and \eqref{mainboundmulti} are the best \textit{inequality-independent bounds} that one can obtain with i.i.d. sources. Moreover, since there exist inequalities for which the quantum and the no-signaling limits coincide, the bound to reach the quantum limit cannot be better. If the inequality is given, however, much less measurement dependence may be sufficient to reach the no-signaling limit, and even less to reach the quantum limit if it is lower. We elaborate further on this point in the following paragraph.

\subsection{Inequality-dependent bounds}

\begin{figure*}
\includegraphics[scale=0.9]{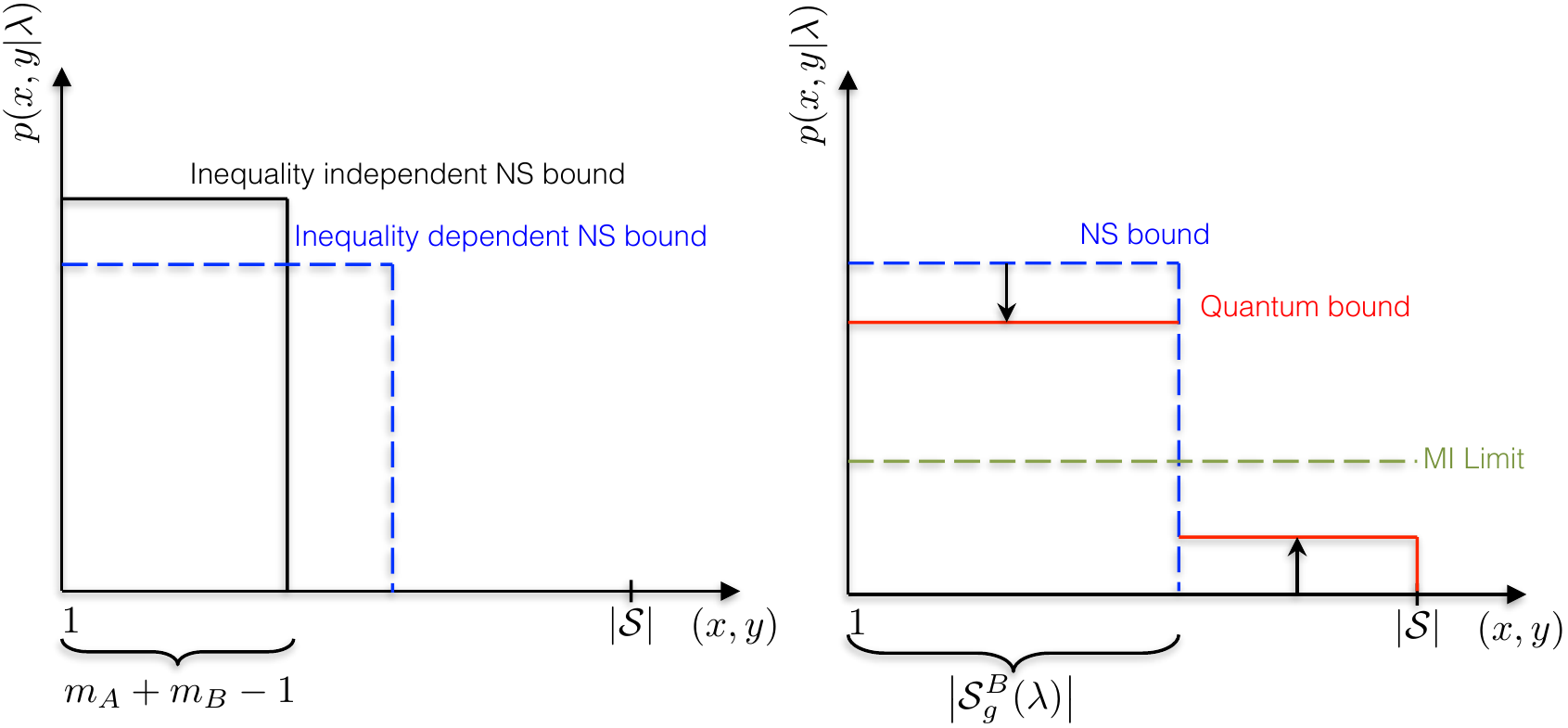}
\caption{(color online)  sources that reach the critical min-entropy bound for uniform observed distribution of settings. For the inequality independent bound \eqref{mainboundmulti}, the source is uniform on $m_A +m_B -1$ settings and is zero elsewhere. For a given inequality, the no-signaling limit may be reached with a source that is uniform on a larger number of settings $|\mathcal{S}^{B}_{g}|$, and still zero on the others; in order to reach only the quantum limit, one can allow the settings in $\mathcal{S}^B_h$ to be used sometimes. Of course, for each $\lambda$, the settings that are chosen may vary.}\label{figsources}
\end{figure*}

Let $B$ define a Bell inequality, whose local, quantum and no-signaling limits are given by $B_\text{L}\leq B_\text{Q}\leq B_\text{NS}$, and $\mathcal{S}^B$ be the set of settings that are used by the Bell inequality~\footnote{The chained Bell inequality is an example of an inequality whose value depends only some of the possible inputs.  Only terms such that $x=y$ or $x=y+1$ and the term for $x=1,\ y=m$ appear.}.  Again, for each $\lambda$ there is a local strategy for assigning outputs such that in order to achieve the no-signaling limit, some settings will be incompatible with this strategy and must be hidden by measurement dependence.  Let this set of inputs be $\mathcal{S}^{B}_{h}(\lambda)$.  Then, an arbitrary no-signaling point is required to be compatible with a local point only on the subset $\mathcal{S}^{B}_{g}(\lambda) := \mathcal{S}^B\setminus\mathcal{S}^B_h(\lambda)$.  Suppose an inspection of the inequality $B$ shows that \textit{at most} $|\mathcal{S}^B_h|$ of these $|\mathcal{S}^B|$ settings must be hidden for any choice of $\lambda$.  Once the probabilities of the settings $\mathcal{S}^B_h(\lambda)$ are set to zero, the min-entropy is maximized by the uniform distribution over the remaining $|\mathcal{S}^{B}_{g}|$ settings (FIG.~\ref{figsources}). However, one must be very careful to show the existence of $p(\lambda)$ which satisfies~(\ref{eq:pobs}). Whenever such a distribution exists, if $P_M\geq 1/|\mathcal{S}^{B}_{g}|$, the non-local game can be won with probability one with local strategies.  As implied by the results of the previous section, if the observed input distribution $p_{obs}(\mathbf{z})$ is uniform then a strategy in the form of equation~(\ref{eq:strat}) or its generalization in Appendix~\ref{appx} with a uniform probability over $\lambda$ will always satisfy~(\ref{eq:pobs}).  However, it is possible to do better in some cases where $|\mathcal{S}^B| < |\mathcal{S}|$.  In such cases $|\mathcal{S}^B_h|$ (the most settings that must be hidden for any $\lambda$) can be small.  Here, for a uniform $p_{obs}$ equation~(\ref{eq:pobs}) will also be satisfied provided possibly more settings than required are hidden for each $\lambda$ such that $|\mathcal{S}^{B}_h(\lambda)| = |\mathcal{S}^{B}_h|$ for all $\lambda$ and if $\mathcal{L}(\mathbf{z})$ is the set of $\lambda$s for which $\mathbf{z} \in \mathcal{S}^{B}_h(\lambda)$, $|\mathcal{L}(\mathbf{z})| = |\mathcal{S}^B_h|$ must be constant for all $\mathbf{z}$.  This is a symmetry condition that can be met by many Bell inequalities.  As before, the existence of this example proves that a min-entropy source with 
\ba
k \leq  N\,\log \big(|\mathcal{S}^{B}_{g}|\big)\label{boundnsineq}
\ea can reach the no-signaling limit of $B$ with local strategies for uniform input distributions.  In the following section we will show how to obtain bounds for arbitrary $p_{obs}$ and that approach will also give tight bounds and optimal strategies when the inequality is one in which the size of the ``hidden sets'' varies with $\lambda$.

Further, if $B_\text{Q}< B_\text{NS}$, in order to simulate physics one may be content with \textit{reaching the quantum limit}. A possible i.i.d.\ source (not proved to be optimal) is the following (see Fig.~\ref{figsources}). With probability $1-q$, the settings are chosen uniformly among all ${\cal{M}}$ possible $K$-tuples: this is measurement independence, so $B\leq B_L$ on these cases, and the physical process $\lambda$ can be chosen as one of those that saturate $B=B_\text{L}$. In the other instances, the settings are chosen uniformly in $\mathcal{S}^{B}_{g}$ and the physical process $\lambda$ is chosen in each case in order to achieve $B=B_\text{NS}$. In other words, this source is a convex combination of the measurement independent uniform source and the source described in the previous paragraph. Note that this new source will automatically satisfy the constraint ($\ref{eq:pobs})$. For such a source, therefore, $P_M$ is the probability of each setting in $\mathcal{S}^{B}_{g}$, which reads $P_M=\frac{1}{|\cal{S}|}\left[1+q\left({|\cal{S}|}/|\mathcal{S}^{B}_{g}|-1\right)\right]$. With this measurement-dependent strategy, one can reach $B=qB_\text{NS} + (1-q)B_L$, so $B\geq B_\text{Q}$ for $q\geq (B_\text{Q}-B_\text{L})/(B_\text{NS}-B_\text{L})$. In summary, the quantum limit can be achieved with an i.i.d.~source with
\ba
P^{B,Q}_M&\geq& \frac{1}{|\cal{S}|}\left[1+\frac{B_\text{Q}-B_\text{L}}{B_\text{NS}-B_\text{L}}\left({|\cal{S}|}/|\mathcal{S}^{B}_{g}|-1\right)\right]\,,
\label{eq:PQ}
\ea that is, a min-entropy source with $k \leq  -N\,\log P^{B,Q}_M$ can reach the quantum limit of $B$ with local strategies, for a uniform input distribution.

Let us illustrate the methodology with the analysis of some inequalities:
\begin{itemize}
\item \textit{CHSH}: here, it is always necessary and sufficient to hide one pair of settings. Therefore $|\mathcal{S}^{B}_{g}|=3$ and the inequality-dependent bound \eqref{boundnsineq} is the same as the inequality-independent one \eqref{mainbound} to reach the no-signaling limit, as already proved in \cite{KHSPMKSE12}. Recall that this does not prove the bounds to be tight, because they are based on explicit i.i.d. sources: non i.i.d. sources may lead to tighter bounds, though we do not know any example. As for reaching the quantum limit, we have $P^{B,\text{Q}}_M=\frac{1}{4}[1+(\sqrt{2}-1)/3]\approx 0.2845$. 

\item \textit{Chained inequality}: here again, as we have seen in paragraph \ref{ss:critical}, it is always necessary and sufficient to hide only one pair of settings out of ${\cal{M}}=m^2$, so $|\mathcal{S}^{B}_{g}|=m^2-1$ and $|S^{B}_h(\lambda)|=1$ for all $\lambda$. As a consequence, in terms of min-entropy, the inequality-dependent bound \eqref{boundnsineq} is $N\,\log (m^2-1)$, which is approximately twice the value $N\,\log (2m-1)$ obtained from \eqref{mainbound}. For large $m$, the quantum and no-signaling limits basically coincide.

\item \textit{CGLMP inequalities}:  like the CHSH inequality, the CGLMP inequalities are two party inequalities where each party has two inputs.  However, this family of inequalities has $d$ possible outputs for each party.  In the quantum case, the CGLMP inequalities can provide more robustness against measurement dependence than the CHSH inequality, in the sense that the min-entropy of the inputs given the source must be lower if the quantum bound is to be achieved.  The reason is that it has been shown that as $d\rightarrow \infty$, the quantum limit increases and approaches the no-signaling limit~\cite{ZG08,ZRGW10}.  As can be seen, inspecting equation~(\ref{eq:PQ}), the value of $(B_\text{Q}-B_\text{L})/(B_\text{NS}-B_\text{L})$ will increase with $d$, and the value of $P_{M}$ necessary to reach the quantum limit with local resources increases, until it reaches the no-signaling value $P_M^{B,\text{NS}}$ in the limit.

\item \textit{Mermin inequalities}:  Mermin inequalities~\cite{M90,BK93} are multipartite inequalites such that for odd numbers of parties, the quantum and no-signaling bounds coincide.  For this reason, the 5-party Mermin inequality was used in~\cite{GMTDAA12} to amplify randomness.  When the number of parties is an odd number at least 3 only a subset of all possible inputs appear in the corresponding Mermin inequality and the inequality-independent bound is not tight.  In general, for odd $K$ parties $|\mathcal{S}^{B}_{g}|=2^{K-2}+2^{(K-3)/2}$ and $|\mathcal{S}^{B}|=2^{K-1}$~\cite{CGR08}. Specifically for the 5-party case, $|\mathcal{S}^{B}_{g}|=10$ and $|\mathcal{S}^{B}|=16$.

\end{itemize}

\section{The positive effect of biasing the choices of the settings}
\label{sec:nonunif}

Theorem 1 shows that assuming a full min-entropy source on the measurement settings, for any meaningful conclusion to be drawn from a Bell test, it must be that $H_{\min}(\mathbf{XY|\Lambda})>N\,\log (m_A+m_B-1)$. However, recalling that the role of the observed data is actually a \textit{constraint} imposed on the underlying model (similar to equation (\ref{eq:pobs})), we can hope to use it to our advantage. This motivates the question: for a given value of $H_{\min}(\mathbf{XY|\Lambda})=N k>N\,\log (m_A+m_B-1)$ that is being assumed, what is the optimal distribution on the inputs such that the maximum possible Bell value obtainable with this degree of measurement dependence and only local resources is as low as possible.  Because the situation for non i.i.d. models is intractable, we are restricting ourselves to the i.i.d. model for the remaining of this chapter. Here instead of the min-entropy, the guessing probability $P_M$ is used exclusively as the figure of merit of measurement dependence. First, we consider the CHSH inequality as an explicit example.

\subsection{The CHSH Inequality}

Intuitively, we expect that the optimal solution is to set for each input round $ H_{\min}(XY)=H_{\min}(XY|\Lambda) = k$ and $p_{obs}(xy) = 2^{-k}$ for three pairs $(x,y)$ and $p_{obs}(x'y') = 1-3\times2^{-k}$ for the final pair because in this case $\Lambda$ cannot contain any further information on $XY$ than is available simply from observing the distribution $p_{obs}(xy)$.  We will highlight an example of this type of distribution later in this section. This is not a uniform distribution, so we can already see that non-uniform input distributions can be beneficial.  In this section, we will consider fixed input distributions $p_{obs}(xy)$ and find the maximum value that the CHSH inequality can take given a bound on $P_M$.  Note that the method in this section extends to any multipartite Bell inequality.

We want to find the violation $B_{\text{CHSH}}^{\max}$, under local resources and measurement dependence, as a function of $P_M$ and $p_{obs}(xy)$. To this end, observe that the local distributions form a convex polytope and so is the set of sources with a fixed value of $P_M$ (the source polytope). Using the decomposition into extremal points of a convex polytope,  we have
\ba
p(ab|xy\lambda)&=&\sum_{i}\alpha_{i}(\lambda)e_{i}(ab|xy)\,,\\
p(xy|\lambda)&=&\sum_{j}\beta_{j}(\lambda)f_{j}(xy)\,,
\ea
where $e_{i}(ab|xy)$ are the extremal points of the local polytope and $f_{j}(xy)$ are the extremal points of the source polytope. Now after multiplying by both sides by $p_{obs}(xy)$, the i.i.d. model with measurement dependence becomes 
\begin{eqnarray}
p(abxy) &=& \int_\Lambda d\lambda \sum_{ij}\alpha_{i}(\lambda)\beta_{j}(\lambda)e_{i}(ab|xy)f_{j}(xy)p(\lambda) \nonumber \\
&=& \sum_{ij}\gamma_{ij}g_{ij}(abxy)\,, 
\end{eqnarray}
where
\begin{eqnarray}
\gamma_{ij} = \int_\Lambda d\lambda \alpha_{i}(\lambda)\beta_{j}(\lambda)p(\lambda)\,,  \\
g_{ij}(abxy) = e_{i}(ab|xy)f_{j}(xy)\,.
\end{eqnarray}
In this notation, the problem becomes a linear program, i.e. finding
\begin{equation}
B_{\text{CHSH}}^\text{max}(P_M,p_{obs}(xy))=\max_{p(abxy)} \ \ \sum_{abxy}(-1)^{a+b+xy}\frac{p(abxy)}{p_{obs}(xy)}
\end{equation}
subjected to the constraints
\begin{equation}
p(abxy)=\sum_{ij}\gamma_{ij}g_{ij}(abxy), \ \sum_{ab}p(abxy)=p_{obs}(xy)
\end{equation}
for known values of $p_{obs}(xy)$ and $P_M$. The result is presented in FIG.~\ref{fig:CHSHnonunfinputs}.

\begin{figure}[h]
\includegraphics[trim=1cm 1cm 1cm 1cm, scale=0.47]{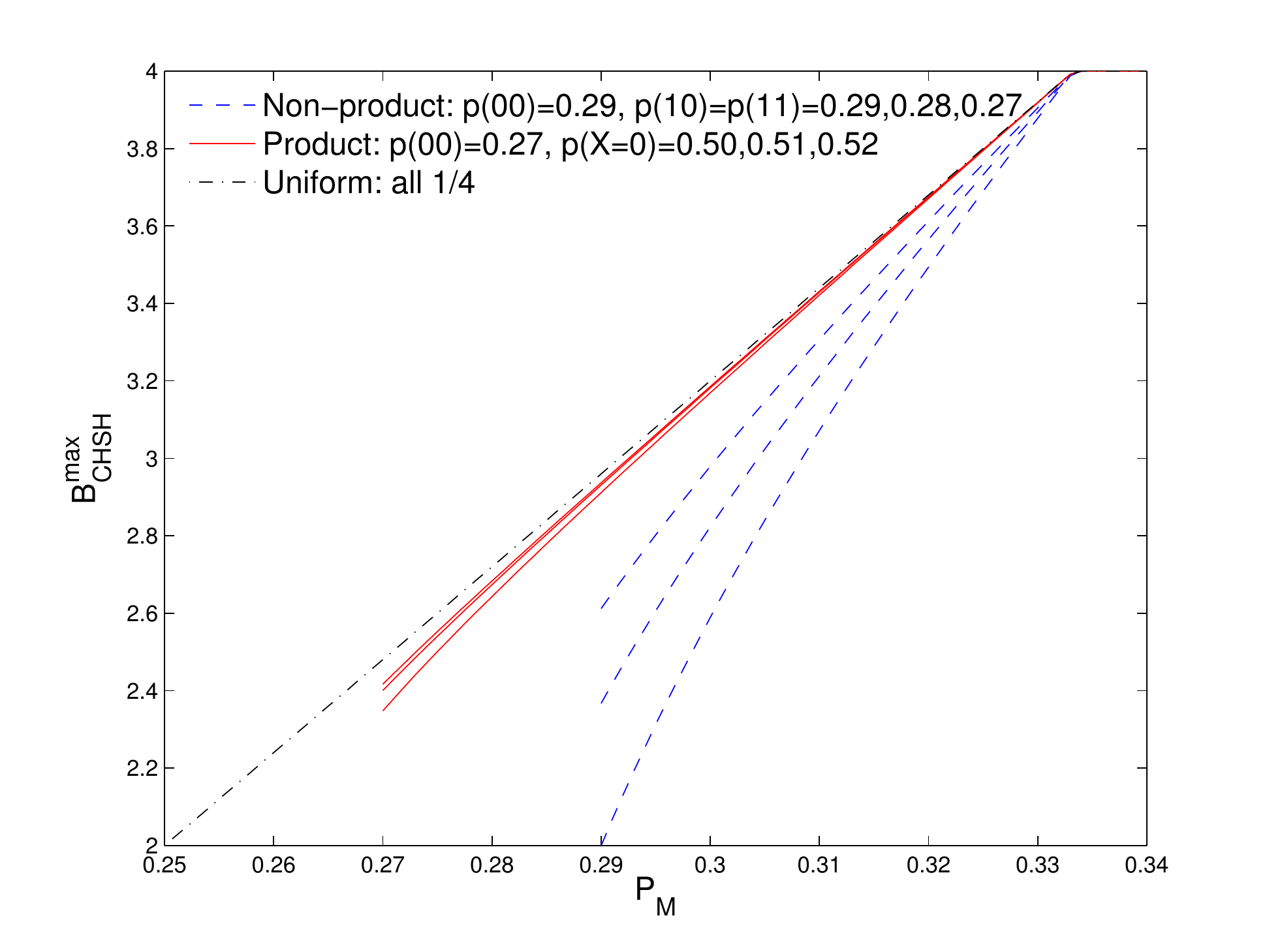}
\caption{(color online) Plot of maximum CHSH value against measurement dependence $P_M$ for different $p_{obs}$. Notice that $P_M$ start from $\max\{p_{00},p_{01},p_{10},p_{11}\}$ because of the data processing inequality: no underlying model of smaller $P_M$ can reproduce the observed input statistics.  In a Bell test with an assumed dependence bound $P_M$, if the value of the inequality is above the line that corresponds to the observed input distribution $p_{obs}$ then there is intrinsic randomness in the outcomes contributed by $\lambda$.  Therefore, for some observed violations, biased settings statistics can allow the certification of intrinsic randomness while uniform statistics cannot.}
\label{fig:CHSHnonunfinputs}
\end{figure}

Using the numerical results, it is easy to see that the optimal strategy for maximizing the Bell value $B_{\text{CHSH}}$ whether or not the observed distribution is uniform is to choose 
\ba
p(xy|\lambda_{\bar{x},\bar{y}}) = \left\{ \begin{array}{cc} P_M & \text{if } x,y\in \mathcal{S}_{\bar{x},\bar{y}}  \\  1-3P_{M} & \text{otherwise}  
\end{array} \right.  \ ,
\ea
for each $\lambda_{\bar{x},\bar{y}}$, where $\mathcal{S}_{\bar{x},\bar{y}}$ is defined in equation~(\ref{eq:setdef}).  Choosing this strategy, it is straightforward to find an analytic expression for $B_{\text{CHSH}}^{\max}$:
\begin{equation*}
B_{\text{CHSH}}^{\max}(P_M) = 4 - \frac{1}{2}\left[ (1-3P_M)q + \frac{1-3P_M}{4P_M-1} (q -16) \right] \,
\end{equation*}
where for convenience we define $q: = \sum_{x,y} \frac{1}{p_{obs}(x,y)}$.  This expression is only valid for $\max_{x,y} p_{obs}(xy)\leq P_M < \frac{1}{3}$ ($H^{obs}_{\min} (xy) \geq H_{\min}(xy|\lambda) > \log 3$).  Notice that when the distribution is uniform $q = 16$ and the second term vanishes, leaving a linear expression in $P_M$.

It is interesting to observe that for the purpose of violating Bell inequalities (that is, demonstrating non-locality by exceeding $B^{\max}$) under measurement dependence, suppose the inputs have privacy quantified by $P_M$, then it is advantageous for us to \emph{purposely} select an input distribution that is not uniform. This can be seen easily from for example the red curves for $P_M = 0.27$: selecting uniform input distribution allows a violation up to about 2.5 while selecting non-uniform input distribution only allows a lower maximum violation!  Note that for non-uniform distributions on the inputs the upper bound on the Bell value is only as low as 2 (the local bound assuming measurement independence) for non-product distributions on the inputs.  (See the blue dashed curves.)  All non-uniform product input distributions can have Bell values larger than 2, if measurement independence is relaxed.  Notice also, that the lowest blue curve, the one that takes the value 2 at $P_M = 0.29$ is the one corresponding to the distribution $[p(00), p(01), p(10), p(11)] = [0.29, 0.13, 0.29, 0.29]$.  This is precisely the form of the distribution on the inputs we has anticipated at the start of this section.

\subsection{Generalizations}

We have seen that for the case of the CHSH inequality, the strategy outlined in section~\ref{sec:robustness} in equation~(\ref{eq:strat}) is the optimal strategy even in the case that the distribution $p_{obs}(xy)$ is not uniform.  In general however this is not the case.  It is possible to find some inequalities that together with some distributions $p_{obs}(\mathbf{z})$ do not admit a strategy of the form
\begin{equation}
p(xy|\lambda_{\bar{\mathbf{z}}})=\begin{cases} P_M, & \mbox{if } \mathbf{z}\in\mathcal{S}_{\bar{\mathbf{z}}} \\ Q(\lambda_{\bar{\mathbf{z}}}), & \mbox{otherwise }\end{cases}
\label{eq:strat2}
\end{equation}
where $Q(\lambda)$ is determined by the normalization condition to be $Q(\lambda_{\bar{\mathbf{z}}}) = \frac{1-|S_g^B(\lambda)|}{|S_h^B(\lambda)|}$.  

Let us limit our focus to inequalities with symmetries such that $|S_g^{B}(\lambda)| = |S_g^{B}|$ and $|S_h^{B}(\lambda)| = |S_h^{B}|$ for all $\lambda$.  In that case, equation~(\ref{eq:pobs}) can be written as a matrix equation, with $p_{obs}(\mathbf{z})$ and $p(\lambda)$ written as vectors and $p(\mathbf{z}|\lambda_{\bar{x},\bar{y}})$ is a matrix whose entries are defined by equation~(\ref{eq:strat2}).  If the $p(\mathbf{z}|\lambda)$ matrix is is full-rank, then there is a unique solution for $p(\lambda)$ that is a valid probability distribution.  This will always be the case if $|S_g|$ and $|S_h|$ have no common factors.  

Examples of cases where the sizes of the sets $S_g$ and $S_h$ have no common factors are any bipartite Bell inequality with terms for all input pairs present and where both parties have the same number of inputs.  For these cases, the min-entropy bound of section~\ref{sec:robustness} also applies for any non-uniform observed distribution on the inputs.

If, for a given inequality, $|S_g|$ and $|S_h|$ have at least one common prime factor, there may be some choices of distribution $p_{obs}$ for which the strategy~(\ref{eq:strat2}) will not be able to reproduce $p_{obs}$ with any valid distribution $p(\lambda)$.  In that case, the optimal strategy may have to be found numerically.  For the i.i.d.\ case, one do this by solving a linear program that is a generalization of the one presented in the previous section.

\section{Bounds on the achievable distributions}
\label{sec:lambda}

While the set of distributions obtainable from a measurement independence local model is the local polytope, that obtainable from a measurement dependence model is in principle a larger set. The example in section \ref{ssviol} makes it clear that this  set can even include signaling distributions. Now we wish to study more carefully this new set of distributions. As mentioned in \ref{sec:measind}, figures of merit of measurement dependence can be defined as restrictions on $p(\lambda|xy)$ instead of $p(xy|\lambda)$. It turns out that in characterizing the set of achievable distributions, it is easier to work with figures of merits based on $p(\lambda|xy)$. In general, specifying $p(xy|\lambda)$ does not specify $p(\lambda|xy)$ unless $p(xy)$ and $p(\lambda)$ are uniform, in which case the two are proportional.  Among the figures of merit one can define is the one by Hall~\cite{H11}.  We shall adopt a slightly modified measure:
\begin{definition}
The quantity $M'$ bounds the distance between the probability distribution over the random variable $\Lambda$ given access to $x,y$ and the distribution over $\Lambda$ without information on $x,y$:
\begin{equation}
M' := \max_{x,y} 2D\big(p(\Lambda|X=x,Y=y), p(\Lambda)\big) \,,
\label{eq:mydef}
\end{equation}
with $p(\Lambda) = \sum_{x,y} p(\Lambda|X=x,Y=y) p_{obs}(X=x,Y=y)$ and $D(\cdot, \cdot)$ is the total variational distance. 
\end{definition}
Essentially this definition bounds the distance of any one element in the distribution away from a strategy independent of Alice and Bob's inputs $x$ and $y$, which is similar in flavour to a Santha-Vazirani bound.

Now we wish to compare the point obtainable from a measurement dependence model with local resources,
\begin{equation}
p^\lambda_{ab|xy} = \sum_{\lambda} p(a|x \lambda) p(b|y \lambda) p(\lambda|xy)  \ ,
\label{eq:hvmodel}
\end{equation} 
to its corresponding measurement independent point,
\begin{equation}
p_{ab|xy} = \sum_{\lambda} p(a|x \lambda) p(b|y \lambda) p(\lambda)  \ .
\label{eq:hvmodel}
\end{equation} 
Their distinguishability is characterized by their total variational distance
\begin{eqnarray}
& & \hspace{-3em} D(p^{\lambda}_{AB|xy}, p_{AB|xy})  \nonumber \\
&=& \frac{1}{2} \sum_{a,b} |p^{\lambda}_{ab|xy} - p_{ab|xy}|  \nonumber \\
&=& \frac{1}{2} \sum_{a,b} \left|\sum_{\lambda} p(a|x \lambda) p(b|y \lambda) \left( p(\lambda|xy) - p(\lambda)\right)\right|   \nonumber \\
&\leq& \frac{1}{2}  \sum_{\lambda} \left(\sum_{a,b} p(a|x \lambda) p(b|y \lambda)\right) |p(\lambda|xy) - p(\lambda)|   \nonumber \\
&=& D\big(p(\lambda|x,y), p(\lambda)\big)    \nonumber \\
&\leq& M'/2 \, ,
\end{eqnarray}
where we have applied the triangle inequality. In other words,
\begin{equation}
| p_{ab|xy}^\lambda - p_{ab|xy} |  \leq M' \ \ \forall \  a,b,x,y \,,
\label{eq:probmd}
\end{equation}
which gives a bound on the distributions that can be created with a measurement dependence model of dependence bounded by $M'$.  
 
There is another model that relaxes assumptions about correlations between two parties: quantum ``cross-talk'' between non-fully-isolated devices~\cite{SPM12}.  In this case, the assumption that the measurement operators are in a tensor product, $M_A \otimes M_B$, is relaxed and thus it is possible to use the measurement itself to introduce quantum correlations between the two parties.  The model was proposed to describe excess correlations that could result from a pair of trapped ions measured while situated next to each other in a refrigerator after being entangled.   Comparing our equation~(\ref{eq:probmd}) with equation~(2) in~\cite{SPM12}, we find that the distributions allowed by cross-talk are formally the same as those allowed by measurement dependence as defined in equation~(\ref{eq:mydef}), even though the physical interpretation is very different.

\section{Conclusions}

Bell tests are an essential tool in device-independent approaches.  They rely on a set of reasonable assumptions, but some of the assumptions are untestable.  In particular, the correlations between source and settings are \emph{strictly unobservable} and therefore the amount of reduction of measurement independence is ultimately an assumption, either on the power of an adversary, on a physical model for the experiment.  This study has demonstrated that when relaxing this assumption, the definition used, be it min-entropy or a Santha-Vazirani condition, is critical with respect to what kind of guarantees can be obtained from a Bell test.  There are results~\cite{GMTDAA12,GHHHPR13} showing that with a Santha-Vazirani source assumption arbitrarily weak randomness can be amplified using a protocol that checks for the violation of a Bell inequality.  This cannot be accomplished using a min-entropy condition, as we have demonstrated in section~\ref{sec:robustness}: for sufficiently low min-entropy any inequality can be violated up to its no-signaling bound, using only the classical measurement dependent correlations and in a way that a third party could predict all of the outcomes of the measurements.  Even for the protocol in~\cite{CR12} that amplifies bounded randomness (in the Santha-Vazirani definition) using violations of the chained Bell inequality, in order to get perfectly free bits out, the number of outputs for this inequality must go to infinity.  As we point out in section~\ref{ss:critical}, in this limit the chained Bell inequality is not robust to any relaxation of input randomness if the min-entropy definition is used instead.

The bounds on the min-entropy presented in section~\ref{sec:robustness} give immediate bounds for any inequality on the amount of input randomness required to draw conclusions about whether the violation of a Bell inequality can give any certification of quantum or non-local behavior.  The method demonstrated for the CHSH inequality in section~\ref{sec:nonunif} demonstrates how to get tight upper bounds for the value a given Bell inequality can take assuming a min-entropy bound for any distribution over the measurement settings.  It also shows that there may be advantages to deliberately choosing non-uniform distributions over measurement settings in device independent protocols, depending on what assumptions are being made.  Relaxing the assumption of measurement-dependence increases the set of probability distributions $\{p(ab|xy)\}$ that can result from a Bell test assuming a local, realistic hidden variable model and section~\ref{sec:lambda} gives an expression bounding this increase.

Being as the assumption of measurement independence cannot be confirmed, it is important to understand the consequences for device independent protocols when it is relaxed.  It is especially interesting that the min-entropy condition, a condition widely adopted in classical security studies~\cite{Vadhan,ILL89,DS02,DOPS04}, is has such a different behavior from the Santha-Vazirani condition for these device-testing purposes.  We hope that the bounds and characterizations provided here will be useful for constructing protocols that are more robust to extraneous correlations.

\acknowledgements{This work is funded by the Singapore Ministry of Education (partly through the Academic Research Fund Tier 3 MOE2012-T3-1-009) and the Singapore National Research Foundation.  We would like to thank Jean-Daniel Bancal, Jeysthur Ang, Roger Colbeck, Yaoyun Shi, Aarthi Sundaram and Miklos Santha for helpful discussions.}

\bibliographystyle{unsrt}	
\bibliography{randomness}

\appendix

\section{Generalization of Lemma \ref{lemma1} the multipartite case}
\label{appx}

The bound \eqref{mainboundmulti} for the min-entropy in a multipartite scenario is based on the generalization of Lemma \ref{lemma1} that we provide here:

\begin{lemma}\label{lemma2}
Let $P(\mathbf{o}|\mathbf{z})$ be an arbitrary $K$-partite no-signaling distribution with $\mathbf{z}=(z_1,...,z_K)$ where $z_i\in\{1,...,m_i\}$ and $\mathbf{o}$ is a $K$-tuple of outcomes. For any $K$-tuple of settings $\bar{\mathbf{z}}=(\bar{z}_1,...,\bar{z}_K)$, there exists a local distribution $P_L(\mathbf{o}|\mathbf{z})$ such that
\ba
P_L(\mathbf{o}|\mathbf{z}) &=& P(\mathbf{o}|\mathbf{z})   \\
\textrm{for }  \mathbf{z} &\in& {\cal S}_{\bar{\mathbf{z}}}\equiv \left\{(\bar{z}_1,z'_2,...,z'_K),..., (z'_1,...,z'_{K-1},\bar{z}_K)  :   \right.   \nonumber \\
& & \qquad \qquad \qquad \qquad \left.  z'_i\in\{1,...,m_i\} \right\}\,. \nonumber
\ea 
Moreover, this result is tight: if another $K$-tuple of settings is added to the subset ${\cal S}_{\bar{\mathbf{z}}}$, there exist a no-signaling point for which those probabilities are nonlocal.
\end{lemma}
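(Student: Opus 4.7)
The plan is to mirror, in the multipartite setting, the bipartite construction from the proof of Lemma~\ref{lemma1}. Without loss of generality I fix $\bar{\mathbf{z}} = (1,\ldots,1)$ and introduce one output variable $a^{(k)}_{z_k}$ for every party $k$ and every setting $z_k$. The strategy is to build a single joint distribution $\mathbf{P}$ on the full collection of these $\sum_k m_k$ outputs; any such joint trivially induces a local hidden variable model (take $\lambda$ to be the list of all these outcomes together with deterministic response functions), so its marginal on any setting tuple $\mathbf{z}$ defines the candidate $P_L(\mathbf{o}|\mathbf{z})$.

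Concretely I would set
\[
\mathbf{P}(\{a^{(k)}_{z_k}\}) = P(a^{(1)}_1,\ldots,a^{(K)}_1\,|\,1,\ldots,1)\,\prod_{k=1}^K \prod_{z_k=2}^{m_k} P(a^{(k)}_{z_k}\,|\,\{a^{(j)}_1\}_{j\neq k}),
\]
where each conditional in the double product is evaluated with party $k$ using setting $z_k$ and every other party using setting $1$. Every factor is a well-defined conditional derived from $P$, so $\mathbf{P}$ normalises correctly and is a valid probability distribution. To verify that its marginals match $P$ on ${\cal S}_{\bar{\mathbf{z}}}$, I would fix $\mathbf{z}$ agreeing with $\bar{\mathbf{z}}$ in all coordinates except at most one, say $k_0$, and marginalise $\mathbf{P}$ down to the variables $(a^{(k_0)}_{z_{k_0}},\{a^{(j)}_1\}_{j\neq k_0})$: every factor whose index $(k,z_k)$ is not selected integrates to one, leaving only the anchor factor $P(a^{(1)}_1,\ldots,a^{(K)}_1|\mathbf{1})$ multiplied by the single conditional for party $k_0$. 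Summing out $a^{(k_0)}_1$ and invoking no-signaling on party $k_0$'s side --- exactly as in the bipartite proof --- collapses the expression to $P(\mathbf{o}|\mathbf{z})$.

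For tightness, I would augment ${\cal S}_{\bar{\mathbf{z}}}$ by any tuple $\mathbf{z}^*$ that differs from $\bar{\mathbf{z}}$ in at least two coordinates; WLOG take $\mathbf{z}^* = (2,2,1,\ldots,1)$. I would then exhibit the no-signaling distribution obtained by a CHSH-violating Popescu--Rohrlich box shared between parties $1$ and $2$ (on their settings in $\{1,2\}$) combined with deterministic outputs on parties $3,\ldots,K$. The four setting tuples $(1,1,1,\ldots,1)$, $(2,1,1,\ldots,1)$, $(1,2,1,\ldots,1)$, $(2,2,1,\ldots,1)$ all lie in ${\cal S}_{\bar{\mathbf{z}}} \cup \{\mathbf{z}^*\}$, and the induced marginals on parties $1,2$ violate CHSH, so no local distribution can reproduce them on this enlarged subset.

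The main obstacle is the marginalisation book-keeping: one needs to check that no-signaling is invoked on precisely the right subsystem when $a^{(k_0)}_1$ is summed out, and that the chain-rule ordering of conditionals in $\mathbf{P}$ does not introduce spurious cross-party dependencies. I expect this to go through unchanged from the $K=2$ case, since the factorisation is deliberately anchored on the reference outcomes $\{a^{(j)}_1\}$, which play the same role here as $a_1$ and $b_1$ did in the original bipartite construction.
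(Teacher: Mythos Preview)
Your approach coincides with the paper's: the same global joint anchored on the reference outcomes (you write the anchor block $P(a^{(1)}_1,\ldots,a^{(K)}_1\,|\,\mathbf 1)$ directly, whereas the paper expands it via the chain rule $P(o^1_1)\prod_{i\ge 2}P(o^i_1| o^1_1\ldots o^{i-1}_1)$, but these are identical), the same appeal to Fine for locality of the marginals, and the same tightness argument via an embedded CHSH gadget on two of the $K$ parties.

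One remark worth recording: you verify the marginals only for tuples $\mathbf z$ that differ from $\bar{\mathbf z}$ in \emph{at most one} coordinate. For $K=2$ this coincides with the set ${\cal S}_{\bar{\mathbf z}}$ as written, but for $K\ge 3$ it is strictly smaller than the literal reading of the displayed definition (``agrees with $\bar{\mathbf z}$ in at least one coordinate''). Your restricted reading is the operative one: it is the set of size $\sum_k m_k-K+1$ that the construction --- both yours and the paper's --- actually reproduces, and it is exactly what feeds into the bound \eqref{mainboundmulti}. Once two or more of the $z_i$ differ from $1$, the anchor variables being summed out appear in several cross-coupled conditionals and the marginalisation no longer collapses to $P(\mathbf o|\mathbf z)$, so nothing stronger is available from this construction.
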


\begin{proof}
Again, let us fix $\bar{\mathbf{z}}=(1,...,1)$ without loss of generality.  Let $o^i_{z_i}$ be the $i$th party's outcome given the $z_i$th measurement setting.  From the no-signaling distribution $P$, we construct a valid probability distribution
\begin{widetext}
\ba
\mathbf{P}(o^1_{1}...o^1_{m_1};...;o^K_{1}...o^K_{m_K})&=&P(o^1_1)\left[\prod_{i=2}^{K} P(o^i_1|o^1_1...o^{i-1}_1)\right]\,\left[\prod_{i=1}^{K}\prod_{j=2}^{m_i} P(o^i_j|o^1_1...o^{i-1}_1 o^{i+1}_1...o^{K}_1)\right]
\label{eq:gendistr}
\ea 
\end{widetext}
whose marginals $\mathbf{P}(o^1_{z_1};...;o^K_{z_K})\equiv P_L(o^1...o^K|z_1...z_k)$ define a local distribution by Fine's result \cite{Fine82}. To verify that we have a local distribution that mimics the initial no-signaling one on the desired subset of pairs of settings, consider this example: for the input string $(1,z_2,...z_k)$ with the distribution $\mathbf{P}(o^1_{1};o^2_{z_2};...;o^K_{z_K})=P_{L}(o^1o^2...o^K|1,z_2,...z_k)$ we sum first over all possible values of each outcome variable $o^1_2,...,o^1_{m_1}$ to find
\begin{widetext}
\ba
\mathbf{P}(o^1_1;o^2_{1}...o^2_{m_2};...;o^K_{1}...o^K_{m_K})= P(o^1_1) \, \prod_{i=2}^{K} \left[P(o^i_1|o^1_1...o^{i-1}_1)\,\prod_{j=2}^{m_i} P(o^i_j|o^1_1...o^{i-1}_1 o^{i+1}_1...o^{K}_1)\right]
\ea
\end{widetext}
after which continue to sum over all the $o^2_k,...,o^K_k$ except $o^2_{z_2},...,o^K_{z_K}$ and one is left with a probability distribution $\mathbf{P}(o^1_{1},o^{2}_{z_2}...o^K_{z_K})$ on only $K$ variables, one for each party. The other verifications are similar.  Another way to think of it is to notice that each conditional probability factor on $K$ variables (one variable conditioned on $K-1$ other variables) effectively sets a joint probability distribution on those same $K$ variables.  In the distribution~(\ref{eq:gendistr}) there are $\sum_{i=1}^{K} m_i - K+1$ such factors and so this is exactly how many local points $P_L(\mathbf{o}|\mathbf{z})$ that can be matched for a given hidden variable value (see equation~(\ref{mainboundmulti}) in the main text).  The argument for tightness still works if we consider only two parties among $K$.  For any two parties we can choose a pair of inputs for each to return to a CHSH-type scenario, then the argument follows in the same way as in the proof of Lemma~\ref{lemma1}.

\end{proof}

\end{document}